\providecommand{\U}[1]{\protect\rule{.1in}{.1in}}
\newtheorem{theorem}{Theorem}
\newtheorem{corollary}[theorem]{Corollary}
\newtheorem{lemma}[theorem]{Lemma}
\newtheorem{remark}[theorem]{Remark}
\newenvironment{proof}[1][Proof]{\noindent\textbf{#1.} }{\ \rule{0.5em}{0.5em}}
\begin{document}

\title{\textbf{Self-Similar Blowup Solutions to the 2-Component Degasperis-Procesi
Shallow Water System}}
\author{M\textsc{anwai Yuen\thanks{E-mail address: nevetsyuen@hotmail.com }}\\\textit{Department of Applied Mathematics, The Hong Kong Polytechnic
University,}\\\textit{Hung Hom, Kowloon, Hong Kong}}
\date{Revised 13-Aug-2010}
\maketitle

\begin{abstract}
In this article, we study the self-similar solutions of the 2-component
Degasperis-Procesi water system:%
\begin{equation}
\left\{
\begin{array}
[c]{c}%
\rho_{t}+k_{2}u\rho_{x}+(k_{1}+k_{2})\rho u_{x}=0\\
u_{t}-u_{xxt}+4uu_{x}-3u_{x}u_{xx}-uu_{xxx}+k_{3}\rho\rho_{x}=0.
\end{array}
\right.
\end{equation}
By the separation method, we can obtain a class of self-similar solutions,%
\begin{equation}
\left\{
\begin{array}
[c]{c}%
\rho(t,x)=\max(\frac{f\left(  \eta\right)  }{a(4t)^{(k_{1}+k_{2})/4}},\text{
}0),\text{ }u(t,x)=\frac{\overset{\cdot}{a}(4t)}{a(4t)}x\\
\overset{\cdot\cdot}{a}(s)-\frac{\xi}{4a(s)^{\kappa}}=0,\text{ }a(0)=a_{0}%
\neq0,\text{ }\overset{\cdot}{a}(0)=a_{1}\\
f(\eta)=\frac{k_{3}}{\xi}\sqrt{-\frac{\xi}{k_{3}}\eta^{2}+\left(  \frac{\xi
}{k_{3}}\alpha\right)  ^{2}}%
\end{array}
\right.
\end{equation}
where $\eta=\frac{x}{a(s)^{1/4}}$ with $s=4t;$ $\kappa=\frac{k_{1}}{2}%
+k_{2}-1,$ $\alpha\geq0,$ $\xi<0$, $a_{0}$ and $a_{1}$ are constants.\newline
which the local or global behavior can be determined by the corresponding
Emden equation. The results are very similar to the one obtained for the
2-component Camassa-Holm equations. Our analytical solutions could provide
concrete examples for testing the validation and stabilities of numerical
methods for the systems. With the characteristic line method, blowup
phenomenon for $k_{3}\geq0$ is also studied.

Mathematics Subject Classification (2010): 35B40, 35B44, 35C06, 35Q53

Key Words: 2-Component Degasperis-Procesi\textbf{,} Shallow Water System,
Analytical Solutions, Blowup, Global, Self-Similar, Separation Method,
Construction of Solutions, Moving Boundary, 2-Component Camassa-Holm Equations

\end{abstract}

\section{Introduction}

The 2-component Degasperis-Procesi shallow water system \cite{Po} and
\cite{JG}, can be expressed in the following form%
\begin{equation}
\left\{
\begin{array}
[c]{c}%
\rho_{t}+k_{2}u\rho_{x}+(k_{1}+k_{2})\rho u_{x}=0,\text{ }x\in R\\
u_{t}-u_{xxt}+4uu_{x}-3u_{x}u_{xx}-uu_{xxx}+k_{3}\rho\rho_{x}=0.
\end{array}
\right.  \label{2com}%
\end{equation}
Here $k_{1}$, $k_{2}$, $k_{3}$ are constants. $u=u(x,t)\in R$ is the velocity
of fluid. And $\rho=\rho(t,x)\geq0$ is the density of fluid. For $\rho=0,$ the
system returns to the Degasperis-Procesi equation \cite{ELY}, \cite{LY} and
\cite{Zhou}.

In this article, we adopt an alternative approach (method of separation) to
study some self-similar solutions of 2-component Camassa-Holm equations
(\ref{2com}). Indeed, we observe that the isentropic Euler, Euler-Poisson,
Navier-Stokes and Navier-Stokes-Poisson systems are written by:
\begin{equation}
\left\{
\begin{array}
[c]{rl}%
{\normalsize \rho}_{t}{\normalsize +\nabla\cdot(\rho\vec{u})} &
{\normalsize =}{\normalsize 0}\\
{\normalsize \rho\lbrack\vec{u}}_{t}+\left(  {\normalsize \vec{u}\cdot\nabla
}\right)  {\normalsize \vec{u})]+\nabla P} & {\normalsize =-}{\normalsize \rho
\nabla\Phi+vis(\rho,\vec{u})}\\
{\normalsize \Delta\Phi(t,x)} & {\normalsize =\alpha(N)}{\normalsize \rho}%
\end{array}
\right.  \label{Euler-Poisson}%
\end{equation}
where $\alpha(N)$ is a constant related to the unit ball in $R^{N}$:
$\alpha(1)=2$; $\alpha(2)=2\pi$ and for $N\geq3,$%
\begin{equation}
\alpha(N)=N(N-2)Vol(N)=N(N-2)\frac{\pi^{N/2}}{\Gamma(N/2+1)}%
\end{equation}
where $Vol(N)$ is the volume of the unit ball in $R^{N}$ and $\Gamma$ is a
Gamma function. And as usual, $\rho=\rho(t,\vec{x})$ and $\vec{u}=\vec
{u}(t,\vec{x})\in\mathbf{R}^{N}$ are the density and the velocity
respectively. $P=P(\rho)=K\rho^{r}$\ is the pressure, the constant $K\geq0$
and $\gamma\geq1$. And ${\normalsize vis(\rho,\vec{u})}$ is the viscosity
function.\newline We may seek the radial solutions
\begin{equation}
\rho(t,\vec{x})=\rho(t,r)\text{ and }\vec{u}=\frac{\vec{x}}{r}V(t,r)=:\frac
{\vec{x}}{r}V
\end{equation}
with $r=\left(  \sum_{i=1}^{N}x_{i}^{2}\right)  ^{1/2}$.\newline By the
standard computation, the Euler equations in radial symmetry can be written in
the following form:%
\begin{equation}
\left\{
\begin{array}
[c]{c}%
\rho_{t}+V\rho_{r}+\rho V_{r}+\dfrac{N-1}{r}\rho V=0\\
\rho\left(  V_{t}+VV_{r}\right)  +K\frac{\partial}{\partial r}\rho^{\gamma}=0.
\end{array}
\right.  \label{eqeq1}%
\end{equation}
For the mass equation in radial symmetry, (\ref{eqeq1})$_{1}$, we well know
the solutions' structure (Lemma 3, \cite{Yuen 2}):%
\begin{equation}
\rho(t,r)=\frac{f(\frac{r}{a(t)})}{a(t)^{N}},\text{ }{\normalsize u(t,r)=}%
\frac{\overset{\cdot}{a}(r)}{a(r)}r{\normalsize .}%
\end{equation}

As the 2-component Degasperis-Procesi equations (\ref{2com}), are very similar
to the Euler system (\ref{eqeq1}), we can apply the separation method
(\cite{GW}, \cite{M1}, \cite{Li}, \cite{Yuen1}, \cite{Yuen 2}) to the systems
(\ref{2com}). We note that for the 2-component Camassa-Holm equations
\cite{I}, \cite{CH}, \cite{GY} and \cite{Guo1},%
\begin{equation}
\left\{
\begin{array}
[c]{c}%
\rho_{t}+u\rho_{x}+\rho u_{x}=0\\
m_{t}+2um_{x}+um_{x}+\sigma\rho\rho_{x}=0
\end{array}
\right.
\end{equation}
with
\begin{equation}
m=u-\alpha^{2}u_{xx}.
\end{equation}
By the separation method, we can obtain a class of blowup or global solutions
for $\sigma=1$ or $-1$ \cite{Yuen 3}. In particular, for the integrable system
with $\sigma=1$, we have the global solutions:%
\begin{equation}
\left\{
\begin{array}
[c]{c}%
\rho(t,x)=\left\{
\begin{array}
[c]{c}%
\frac{f\left(  \eta\right)  }{a(3t)^{1/3}},\text{ for }\eta^{2}<\xi\alpha
^{2}\\
0,\text{ for }\eta^{2}\geq\xi\alpha^{2}%
\end{array}
\right.  ,u(t,x)=\frac{\overset{\cdot}{a}(3t)}{a(3t)}x\\
\overset{\cdot\cdot}{a}(s)-\frac{\xi}{3a(s)^{1/3}}=0,\text{ }a(0)=a_{0}%
>0,\text{ }\overset{\cdot}{a}(0)=a_{1}\\
f(\eta)=\frac{1}{\xi}\sqrt{-\xi\eta^{2}+\left(  \xi\alpha\right)  ^{2}}%
\end{array}
\right.  \label{caca}%
\end{equation}
where $\eta=\frac{x}{a(s)^{1/3}}$ with $s=3t;$ $\xi>0$ and $\alpha\geq0$ are
arbitrary constants.\newline As the two component of Camassa-Holm equations
are very similar to the two component of Degasperis-Procesi equations, In this
article, we continue to apply the separation method, to deduce the nonlinear
partial differential equations into much simpler ordinary differential
equations. Therefore, we can contribute a new class of self-similar solutions
in the following corresponding result:

\begin{theorem}
\label{thm:1}We define the function $a(s)$ is the solution of the Emden
equation:
\begin{equation}
\left\{
\begin{array}
[c]{c}%
\overset{\cdot\cdot}{a}(s)-\frac{\xi}{4a(s)^{\kappa}}=0\\
a(0)=a_{0}>0,\text{ }\overset{\cdot}{a}(0)=a_{1}%
\end{array}
\right.  \text{ } \label{Emden}%
\end{equation}
and
\begin{equation}
f(\eta)=\frac{k_{3}}{\xi}\sqrt{-\frac{\xi}{k_{3}}\eta^{2}+\left(  \frac{\xi
}{k_{3}}\alpha\right)  ^{2}}%
\end{equation}
where $\eta=\frac{x}{a(s)^{k_{2}/4}}$ with $s=4t;$ $\kappa=\frac{k_{1}}%
{2}+k_{2}-1,$ $\alpha\geq0,$ $\xi\neq0$, $a_{0}$ and $a_{1}$ are
constants.\newline For the 2-component Degasperis-Procesi equations
(\ref{2com}), there exists a family of solutions, those are:\newline(1)for
$k_{3}=0,$ and $\xi=0,$ or%
\begin{equation}
\rho(t,x)=\frac{\rho_{0}(\eta)}{a(4t)^{(k_{1}+k_{2})/4}},\text{ }%
u(t,x)=\frac{\overset{\cdot}{a}(4t)}{a(4t)}x
\end{equation}
where $\rho_{0}\geq0,$ is an arbitrary $C^{1}$ function.\newline(2) for
$k_{3}>0$ and $\xi>0,$ or \newline(3) for $k_{3}<0$ and $\xi<0,$%
\begin{equation}
\rho(t,x)=\max(\frac{f\left(  \eta\right)  }{a(4t)^{(k_{1}+k_{2})/4}},\text{
}0)\text{ },\text{ }u(t,x)=\frac{\overset{\cdot}{a}(4t)}{a(4t)}x.
\label{ChCh2}%
\end{equation}

\end{theorem}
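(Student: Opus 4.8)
The statement is essentially a verification, so the plan is to substitute the ansatz into \eqref{2com} and reduce it to the two ingredients named in the theorem: the continuity equation becoming an identity, and the momentum equation collapsing to the Emden equation \eqref{Emden} together with an ODE for $f$. The first step is to note that $u(t,x)=\dfrac{\dot a(4t)}{a(4t)}\,x$ is affine in $x$, hence $u_{xx}=u_{xxx}=u_{xxt}=0$; the second equation of \eqref{2com} therefore loses its third-order and nonlocal terms and becomes $u_{t}+4uu_{x}+k_{3}\rho\rho_{x}=0$.

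Writing $s=4t$ (so $d/dt=4\,d/ds$), I would compute for this affine $u$ that the $(\dot a/a)^{2}$ terms cancel in $u_{t}+4uu_{x}$, leaving $4x\,\ddot a/a$; substituting the Emden equation $\ddot a=\xi/(4a^{\kappa})$ turns this into $\xi x/a^{\kappa+1}$. Hence the entire content of the momentum equation is the single relation $k_{3}\rho\rho_{x}=-\xi x/a^{\kappa+1}$. In parallel, plugging the profile $\rho=f(\eta)/a^{(k_{1}+k_{2})/4}$ with $\eta=x/a^{k_{2}/4}$ into the first equation of \eqref{2com} and using $\eta_{s}=-\tfrac{k_{2}}{4}(\dot a/a)\eta$, one sees that $\rho_{t}$, $k_{2}u\rho_{x}$ and $(k_{1}+k_{2})\rho u_{x}$ all carry the common factor $(\dot a/a)\,a^{-(k_{1}+k_{2})/4}$ times $\pm[k_{2}\eta f'+(k_{1}+k_{2})f]$, so they cancel identically: the continuity equation holds for every $f$ (this is the Lemma-3 structure quoted in the introduction).

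It remains to pick $f$ so that $k_{3}\rho\rho_{x}=-\xi x/a^{\kappa+1}$. Computing $\rho\rho_{x}=ff'/a^{(2k_{1}+3k_{2})/4}$ and writing $x=\eta a^{k_{2}/4}$, so that the right side is $-\xi\eta/a^{\,\kappa+1-k_{2}/4}$, I would check that the choice $\kappa=\tfrac{k_{1}}{2}+k_{2}-1$ — together with the self-similar exponent $k_{2}/4$ in $\eta$ — makes the two powers of $a$ coincide; this is the one place where the precise constants in the theorem are forced. What is left is the separable ODE $k_{3}ff'=-\xi\eta$, i.e. $(f^{2})'$ is linear in $\eta$; integrating gives $f^{2}$ affine in $\eta^{2}$, and fixing the integration constant through a parameter $\alpha\ge0$ yields $f$ in the stated square-root form. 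Since $f^{2}$ is then a downward parabola in $\eta$, it is nonnegative only on a bounded interval, which is exactly why $\rho$ is taken as the positive part $\max(\,\cdot\,,0)$; the sign hypotheses $\xi>0,k_{3}>0$ (case (2)) and $\xi<0,k_{3}<0$ (case (3)) are precisely what keeps the expression under the root real and $\ge 0$ there. Case (1) is the degenerate limit $\xi=0$, $k_{3}=0$: the Emden equation becomes $\ddot a=0$, the momentum relation reads $0=0$, and the continuity identity already holds for any profile, so an arbitrary nonnegative $C^{1}$ function $\rho_{0}$ is admissible.

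The computations are elementary; the real work is bookkeeping — keeping $4t$ and $s$ apart, tracking the several fractional powers of $a(s)$, and confirming that the exponents truly agree rather than merely look similar. The one conceptual caveat is regularity at the free boundary $\eta=\pm\alpha$, i.e. $x=\pm\alpha\,a(s)^{k_{2}/4}$: there $\rho$ has a corner and $\rho_{x}$ jumps, so the momentum identity is to be read in the almost-everywhere sense across that moving interface, which is what the "moving boundary" terminology refers to. No study of \eqref{Emden} itself enters the proof of the statement; its solutions only decide afterwards whether the constructed solution is global or blows up.
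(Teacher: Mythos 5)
Your proposal is correct and follows essentially the same route as the paper: verify the mass equation identically for the self-similar ansatz (the paper's Lemma \ref{lem:generalsolution}), use the linearity of $u$ so that $u_{xx}=u_{xxx}=u_{xxt}=0$, reduce $u_t+4uu_x$ via the Emden equation \eqref{Emden} to a term proportional to $\eta$, and then solve the resulting separable ODE $\frac{\xi}{k_3}\eta+f\dot f=0$ for the profile $f$, with case (1) handled as the degenerate $k_3=\xi=0$ situation. The only difference is organizational (you treat the momentum constraint first and the continuity cancellation second), which does not change the argument.
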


\begin{remark}
The structure of solutions (\ref{ChCh2}) of the 2-component Degasperis-Procesi
equations are very similar the one (\ref{caca}), \cite{Yuen 3} of the
Camassa-Holm equations.
\end{remark}

\section{Separation Method}

The mass of the solution is not conserved except on $k_{1}=0$ and $k_{2}=1$ in
equation (\ref{2com})$_{1}$. However, we can also design a nice functional
structure for the mass equation:

\begin{lemma}
\label{lem:generalsolution}For the $1$-dimensional equation of mass
(\ref{2com})$_{1}$:
\begin{equation}
\rho_{t}+k_{2}u\rho_{x}+(k_{1}+k_{2})\rho u_{x}=0 \label{generalmass}%
\end{equation}
there exist solutions,%
\begin{equation}
\rho(t,x)=\frac{f(\frac{x}{a(4t)^{k_{2}/4}})}{a(4t)^{(k_{1}+k_{2})/4}},\text{
}{\normalsize u(t,x)=}\frac{\overset{\cdot}{a}(4t)}{a(4t)}x \label{functional}%
\end{equation}
with the form $f(\eta)\geq0\in C^{1}$ with $\eta=\frac{x}{a(4t)^{k_{2}/4}}$,
and $a(4t)>0\in C^{1}.$
\end{lemma}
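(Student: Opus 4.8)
The plan is to verify the claimed functional form by direct substitution into \eqref{generalmass}, so the lemma reduces to a purely algebraic identity in the powers of $a$; there is no existence problem in any hard sense, since the ansatz is written down explicitly. First I would pass to the similarity variable $s=4t$, so that $\partial_{t}=4\,\partial_{s}$, and abbreviate $a=a(s)$, $\eta=x\,a^{-k_{2}/4}$. The one preliminary computation worth isolating is
\[
\partial_{s}\eta=x\left(-\tfrac{k_{2}}{4}\right)a^{-k_{2}/4-1}\overset{\cdot}{a}=-\frac{k_{2}}{4}\,\frac{\overset{\cdot}{a}}{a}\,\eta,
\]
using $x\,a^{-k_{2}/4-1}=\eta/a$. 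From $u=\frac{\overset{\cdot}{a}}{a}x$ one reads off $u_{x}=\frac{\overset{\cdot}{a}}{a}$, and from $\rho=a^{-(k_{1}+k_{2})/4}f(\eta)$ one gets $\rho_{x}=a^{-(k_{1}+2k_{2})/4}f'(\eta)$ and, by the chain rule together with the displayed formula for $\partial_s\eta$,
\[
\rho_{t}=4\,\partial_{s}\rho=-(k_{1}+k_{2})\,\frac{\overset{\cdot}{a}}{a}\,a^{-(k_{1}+k_{2})/4}f(\eta)-k_{2}\,\frac{\overset{\cdot}{a}}{a}\,\eta\,a^{-(k_{1}+k_{2})/4}f'(\eta).
\]

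The second step is to rewrite the two advection terms in the same normalization. Since $x=\eta\,a^{k_{2}/4}$, the product $x\,a^{-(k_{1}+2k_{2})/4}$ collapses to $\eta\,a^{-(k_{1}+k_{2})/4}$, so $k_{2}u\rho_{x}=k_{2}\frac{\overset{\cdot}{a}}{a}\,\eta\,a^{-(k_{1}+k_{2})/4}f'(\eta)$, while $(k_{1}+k_{2})\rho u_{x}=(k_{1}+k_{2})\frac{\overset{\cdot}{a}}{a}\,a^{-(k_{1}+k_{2})/4}f(\eta)$. Adding the three pieces, the common factor $\frac{\overset{\cdot}{a}}{a}\,a^{-(k_{1}+k_{2})/4}$ pulls out; the two $f'(\eta)$ terms cancel against each other, and the two $f(\eta)$ terms cancel because their coefficients are $\mp(k_{1}+k_{2})$. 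Hence \eqref{generalmass} holds identically, for any $C^{1}$ function $f\geq 0$ and any positive $C^{1}$ function $a$, which is exactly the assertion.

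There is no genuine analytic obstacle here, since the statement is an exact cancellation rather than a nontrivial solvability claim. The only place requiring care, and where a sign or an exponent slip would destroy the cancellation, is the bookkeeping of the powers of $a$: one must consistently use $x=\eta\,a^{k_{2}/4}$ to bring the $f'$-contribution of $k_{2}u\rho_{x}$ into the same form as the $f'$-contribution of $\rho_{t}$, and must track that the time derivative produces precisely the coefficient $-(k_{1}+k_{2})$ on the $f$-term so that it annihilates the $(k_{1}+k_{2})$ coming from $\rho u_{x}$. Once those exponents are matched the identity is immediate, and the remaining freedom in choosing $f$ and $a$ (subject only to positivity and $C^{1}$ regularity) is exactly what will later be consumed when the momentum equation in \eqref{2com} forces the concrete profile $f$ and the Emden equation \eqref{Emden} for $a$.
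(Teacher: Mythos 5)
Your proposal is correct and follows essentially the same route as the paper: direct substitution of the ansatz into the mass equation and verification that the $f$-terms and $f'$-terms cancel identically, with the only bookkeeping point being the powers of $a$ and the chain-rule factor coming from $s=4t$ (which you track correctly via $\partial_t=4\partial_s$ and $x=\eta\,a^{k_2/4}$). No gap; the paper's proof is the same computation carried out without introducing $s$ explicitly.
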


\begin{proof}
We just plug (\ref{functional}) into (\ref{generalmass}) to check:
\begin{align}
&  \rho_{t}+k_{2}u\rho_{x}+(k_{1}+k_{2})\rho u_{x}\\[0.1in]
&  =\frac{\partial}{\partial t}\left(  \frac{f(\frac{x}{a(4t)^{k_{2}/4}}%
)}{a(4t)^{(k_{1}+k_{2})/4}}\right)  +k_{2}\frac{\overset{\cdot}{a}(4t)}%
{a(4t)}x\frac{\partial}{\partial x}\left(  \frac{f(\frac{x}{a(4t)^{k_{2}/4}}%
)}{a(4t)^{(k_{1}+k_{2})/4}}\right)  +(k_{1}+k_{2})\frac{f(\frac{x}%
{a(4t)^{k_{2}/4}})}{a(4t)^{(k_{1}+k_{2})/4}}\frac{\partial}{\partial x}\left(
\frac{\overset{\cdot}{a}(4t)}{a(4t)}x\right) \\[0.1in]
&  =\frac{1}{a(4t)^{(k_{1}+k_{2})/4+1}}(-\frac{(k_{1}+k_{2})}{4})\cdot
\overset{\cdot}{a}(4t)\cdot4\cdot f(\frac{x}{a(4t)^{k_{2}/4}})+\frac
{1}{a(4t)^{(k_{1}+k_{2})/4}}\overset{\cdot}{f}(\frac{x}{a(4t)^{k_{2}/4}}%
)\frac{\partial}{\partial t}(\frac{x}{a(4t)^{k_{2}/4}})\\[0.1in]
&  +\frac{k_{2}\overset{\cdot}{a}(4t)x}{a(4t)}\frac{\overset{\cdot}{f}%
(\frac{x}{a(4t)^{k_{2}/4}})}{a(4t)^{(k_{1}+k_{2})/4}}\frac{\partial}{\partial
x}\left(  \frac{x}{a(4t)^{k_{2}/4}}\right)  +\frac{(k_{1}+k_{2})f(\frac
{x}{a(4t)^{k_{2}/4}})}{a(4t)^{(k_{1}+k_{2})/4}}\frac{\overset{\cdot}{a}%
(4t)}{a(4t)}\\[0.1in]
&  =-\frac{(k_{1}+k_{2})\overset{\cdot}{a}(4t)f(\frac{x}{a(4t)^{k_{2}/4}}%
)}{a(4t)^{(k_{1}+k_{2})/4+1}}-\frac{1}{a(4t)^{(k_{1}+k_{2})/4}}\overset{\cdot
}{f}(\frac{x}{a(4t)^{k_{2}/4}})\frac{x}{a(4t)^{k_{2}/4+1}}\frac{k_{2}}{4}%
\dot{a}(4t)\cdot4\\[0.1in]
&  +\frac{k_{2}\overset{\cdot}{a}(4t)x}{a(4t)}\frac{\overset{\cdot}{f}%
(\frac{x}{a(4t)^{k_{2}/4}})}{a(4t)^{(k_{1}+k_{2})/4}}\frac{1}{a(4t)^{k_{2}/4}%
}+(k_{1}+k_{2})\frac{f(\frac{x}{a(4t)^{k_{2}/4}})}{a(4t)^{(k_{1}+k_{2})/4}%
}\frac{\overset{\cdot}{a}(4t)}{a(4t)}\\[0.1in]
&  =0.
\end{align}
The proof is completed.
\end{proof}

On the other hand, in \cite{DXY} and \cite{Y1}, the qualitative properties of
the Emden equation%
\begin{equation}
\left\{
\begin{array}
[c]{c}%
\ddot{a}(s)-\frac{\xi}{a(s)^{\kappa}}=0\\
\text{ }a(0)=a_{0}\neq0,\text{ }\dot{a}(0)=a_{1}%
\end{array}
\right.
\end{equation}
where for $\kappa\geq1$ in \cite{DXY} and \cite{Y1} and $\kappa=\frac{1}{3}$
in \cite{Yuen 3}; \newline were studied. Therefore, the similar local
existence of the Emden equations (\ref{Emden}),
\begin{equation}
\left\{
\begin{array}
[c]{c}%
\ddot{a}(s)-\frac{\xi}{a(s)^{\kappa}}=0\\
a(0)=a_{0}\neq0,\text{ }\dot{a}(0)=a_{1}%
\end{array}
\right.
\end{equation}
can be proved by the standard fixed point theorem \cite{DXY} and \cite{Y1}. To
additionally show the blowup property of the time function $a(s)$, the
following lemmas are needed.

\begin{lemma}
\label{lemma22}For the Emden equation (\ref{Emden}),%
\begin{equation}
\left\{
\begin{array}
[c]{c}%
\ddot{a}(s)-\frac{\xi}{a(s)^{\kappa}}=0\\
a(0)=a_{0}>0,\text{ }\dot{a}(0)=a_{1},
\end{array}
\right.  \text{ } \label{eq124}%
\end{equation}
the solution exists locally.
\end{lemma}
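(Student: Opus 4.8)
The plan is to recast the second-order initial value problem (\ref{eq124}) as an equivalent integral (fixed point) equation and then apply the Banach contraction mapping theorem, using the fact that the nonlinearity $r\mapsto\xi/r^{\kappa}$ is $C^{\infty}$ — hence locally Lipschitz — on any interval bounded away from $0$. Since $a_{0}>0$, a solution starting at $a_{0}$ stays in such an interval for a short time, which is all the local statement requires.

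First I would integrate (\ref{eq124}) twice on a short interval $I_{\delta}=[-\delta,\delta]$ to obtain the equivalent fixed point problem
\begin{equation}
a(s)=a_{0}+a_{1}s+\int_{0}^{s}(s-\tau)\,\frac{\xi}{a(\tau)^{\kappa}}\,d\tau=:(\mathcal{T}a)(s),
\end{equation}
noting that a continuous, strictly positive solution of this equation is automatically a $C^{2}$ (indeed $C^{\infty}$, by bootstrapping) solution of (\ref{eq124}), and conversely. Next I would work in the complete metric space given by the closed ball
\begin{equation}
B=\Bigl\{\,a\in C(I_{\delta}):\ \|a-a_{0}\|_{\infty}\le a_{0}/2\,\Bigr\},
\end{equation}
so that every $a\in B$ satisfies $a_{0}/2\le a(s)\le 3a_{0}/2$ for all $s\in I_{\delta}$. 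On the compact range $[a_{0}/2,3a_{0}/2]$ the quantities
\begin{equation}
M=\sup_{a_{0}/2\le r\le 3a_{0}/2}\frac{|\xi|}{r^{\kappa}},\qquad L=\sup_{a_{0}/2\le r\le 3a_{0}/2}\frac{|\xi|\,|\kappa|}{r^{\kappa+1}}
\end{equation}
are finite and depend only on $a_{0},\kappa,\xi$; these are a bound and a Lipschitz constant for $r\mapsto\xi/r^{\kappa}$ on that range.

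Then I would verify the two hypotheses of the contraction principle with $\delta$ chosen small. For the self-mapping property, for $a\in B$ one has the elementary estimate $\|\mathcal{T}a-a_{0}\|_{\infty}\le|a_{1}|\delta+\tfrac{1}{2}M\delta^{2}$, which is $\le a_{0}/2$ once $\delta$ is small enough. For the contraction estimate, for $a,b\in B$ one has $\|\mathcal{T}a-\mathcal{T}b\|_{\infty}\le\tfrac{1}{2}L\delta^{2}\,\|a-b\|_{\infty}$, so $\mathcal{T}$ is a contraction as soon as $\tfrac{1}{2}L\delta^{2}<1$. Fixing $\delta$ to meet both smallness requirements, $\mathcal{T}$ has a unique fixed point $a\in B$, which is the desired local solution; uniqueness on $I_{\delta}$ comes for free.

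The step I expect to need the most care is not the contraction argument itself — that is routine — but the bookkeeping that keeps $M$ and $L$ finite: this is exactly where the hypothesis $a_{0}>0$ enters, and one should note that $\kappa=\frac{k_{1}}{2}+k_{2}-1$ may be negative, zero, or positive, which is immaterial because $r^{-\kappa}$ is only ever evaluated on the compact subinterval $[a_{0}/2,3a_{0}/2]$ of $(0,\infty)$. I would close by emphasizing that this argument yields existence on a possibly small interval only; the continuation and blowup behaviour of $a(s)$ is a separate matter, handled by the energy-type estimates in the lemmas that follow.
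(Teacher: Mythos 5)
Your proposal is correct and is exactly the argument the paper has in mind: the paper offers no written-out proof of this lemma, merely remarking that local existence "can be proved by the standard fixed point theorem" as in the cited references, and your contraction-mapping argument on the ball $\|a-a_{0}\|_{\infty}\le a_{0}/2$ (which keeps $a$ in $[a_{0}/2,3a_{0}/2]$ so that $r\mapsto\xi/r^{\kappa}$ is bounded and Lipschitz regardless of the sign of $\kappa$) is a faithful and complete rendering of that standard argument. No gaps; your observation that $a_{0}>0$ is precisely what keeps $M$ and $L$ finite matches the role of that hypothesis in the lemma.
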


After obtaining the nice structure of solutions (\ref{functional}), we just
use the techniques of separation of variable (\cite{GW}, \cite{M1}, \cite{Li},
\cite{Yuen1}, \cite{Yuen 2} and \cite{Yuen 3}), to prove the theorem:

\begin{proof}
[Proof of Theorem \ref{thm:1}]From Lemma \ref{lem:generalsolution}, it is
clear to see our functions (\ref{ChCh2}) fit well into the mass equation,
(\ref{2com})$_{1}$, except for two boundary points.

The second equation of 2-component Camassa-Holm equations (\ref{2com})$_{2}$,
becomes:%
\begin{align}
&  u_{t}-u_{xxt}+4uu_{x}-3u_{x}u_{xx}-uu_{xxx}+k_{3}\rho\rho_{x}\\
&  =u_{t}+4uu_{x}+k_{3}\rho\rho_{x}\label{abc}%
\end{align}
As the velocity $u$, in the solutions (\ref{ChCh2}) is a linear flow:
\begin{equation}
u=\frac{\dot{a}(4t)}{a(4t)}x
\end{equation}
we have
\begin{equation}
u_{xx}=0.
\end{equation}
The equation (\ref{abc}) becomes:
\begin{align}
&  =u_{t}+4u_{x}u+k_{3}\rho\rho_{x}\\[0.1in]
&  =\frac{\partial}{\partial t}\left(  \frac{\dot{a}(4t)}{a(4t)}\right)
x+4\left(  \frac{\dot{a}(4t)}{a(4t)}\right)  x\frac{\dot{a}(4t)}{a(4t)}%
+k_{3}\frac{f(\frac{x}{a(4t)^{k_{2}/4}})}{a(4t)^{(k_{1}+k_{2})/4}}\left(
\frac{f(\frac{x}{a(4t)^{k_{2}/4}})}{a(4t)^{(k_{1}+k_{2})/4}}\right)
_{x}\\[0.1in]
&  =\left(  4\frac{\ddot{a}(4t)}{a(4t)}-4\frac{\dot{a}(4t)^{2}}{a(4t)^{2}%
}\right)  x+4\left(  \frac{\dot{a}(4t)}{a(4t)}\right)  \frac{\dot{a}%
(4t)}{a(4t)}x+k_{3}\frac{f(\frac{x}{a(4t)^{k_{2}/4}})}{a(4t)^{(k_{1}+k_{2}%
)/4}}\frac{\dot{f}(\frac{x}{a(4t)^{k_{2}/4}})}{a(4t)^{(k_{1}+k_{2})/4}}%
\frac{1}{a(4t)^{k_{2}/4}}\\[0.1in]
&  =4\frac{\ddot{a}(4t)}{a(4t)}x+k_{3}\frac{f(\frac{x}{a(4t)^{k_{2}/4}}%
)\dot{f}(\frac{x}{a(4t)^{k_{2}/4}})}{a(4t)^{\frac{k_{1}}{2}+\frac{3k_{2}}{4}}%
}\\[0.1in]
&  =\frac{k_{3}}{a(4t)^{\frac{k_{1}}{2}+\frac{3k_{2}}{4}}}\left(  \frac{\xi
}{k_{3}}\eta+f(\eta)\dot{f}(\eta)\right)
\end{align}
for $k_{3}\neq0$, with the Emden equation:%
\begin{equation}
\left\{
\begin{array}
[c]{c}%
\ddot{a}(s)-\frac{\xi}{4a(s)^{\kappa}}=0\\
a(0)=a_{0}\neq0,\text{ }\dot{a}(0)=a_{1}%
\end{array}
\right.
\end{equation}
by defining the variables $s:=4t$, $\eta:=x/a(s)^{k_{2}/4}$ and $\kappa
=\frac{k_{1}}{2}+k_{2}-1.$\newline(1)for $k_{3}=0,$ we have $\xi=0:$%
\begin{equation}
\rho(t,x)=\frac{\rho_{0}(\eta)}{a(4t)^{(k_{1}+k_{2})/4}},\text{ }%
u(t,x)=\frac{\overset{\cdot}{a}(4t)}{a(4t)}x
\end{equation}
where $\rho_{0}$ is an arbitrary $C^{1}$ function.\newline Now, we can
separate the partial differential equations into two ordinary differential
equations. Then, we only need to solve for $\frac{\xi}{k_{3}}<0,$%
\begin{equation}
\left\{
\begin{array}
[c]{c}%
\frac{\xi}{k_{3}}\eta+f(\eta)\dot{f}(\eta)=0\\
f(0)=-\alpha\leq0
\end{array}
\right.  \label{ode1}%
\end{equation}
or for $\frac{\xi}{k_{3}}>0,$%
\begin{equation}
\left\{
\begin{array}
[c]{c}%
\frac{\xi}{k_{3}}\eta+f(\eta)\dot{f}(\eta)=0\\
f(0)=\alpha\geq0.
\end{array}
\right.  \label{ode2}%
\end{equation}
The ordinary differential equations (\ref{ode1}) or (\ref{ode2}) can be solved
exactly as
\begin{equation}
f(\eta)=\frac{k_{3}}{\xi}\sqrt{\frac{-\xi}{k_{3}}\eta^{2}+\left(  \frac{\xi
}{k_{3}}\alpha\right)  ^{2}}.
\end{equation}
In fact, we have the self-similar solutions in details:\newline(2) For
$k_{3}>0$ and $\xi>0$, or\newline(3) For $k_{3}<0$ and $\xi<0$:
\begin{equation}
\left\{
\begin{array}
[c]{c}%
\rho(t,x)=\max(\frac{f\left(  \eta\right)  }{a(4t)^{(k_{1}+k_{2})/4}},\text{
}0)\text{, }u(t,x)=\frac{\overset{\cdot}{a}(4t)}{a(4t)}x\\
\overset{\cdot\cdot}{a}(s)-\frac{\xi}{4a(s)^{\kappa}}=0,\text{ }a(0)=a_{0}%
\neq0,\text{ }\overset{\cdot}{a}(0)=a_{1}\\
f(\eta)=\frac{k_{3}}{\xi}\sqrt{\frac{-\xi}{k_{3}}\eta^{2}+\left(  \frac{\xi
}{k_{3}}\alpha\right)  ^{2}}.
\end{array}
\right.  \label{abc1}%
\end{equation}
With the assistance of Lemmas \ref{lemma22}, we may obtain the local existence
of the above solutions.

The proof is completed.
\end{proof}

After we construct the solutions (\ref{ChCh2}), the blowup or global behavior
can be analyzed from the Emden equation:%
\begin{equation}
\left\{
\begin{array}
[c]{c}%
\ddot{a}(s)-\frac{\xi}{a(s)^{\kappa}}=0\\
\text{ }a(0)=a_{0}\neq0,\text{ }\dot{a}(0)=a_{1}.
\end{array}
\right.
\end{equation}
To additionally show the blowup or global property of the time function
$a(s)$, the following lemma is needed. For the particular case of $k_{1}=1$
and $k_{2}=1$, we have
\begin{equation}
\kappa=\frac{k_{1}}{2}+k_{2}-1=\frac{1}{2}<1.
\end{equation}
For the case $\kappa=1$ or $\kappa=1/3$, the blowup or global results for the
Emden equation are already shown in \cite{Y1} and \cite{Yuen 3}. For the case
$0<\kappa\leq1$, we can show by the energy method \cite{LS}. Due to the proof
is very similar, we omit the details here to have the lemma:

\begin{lemma}
\label{lemma33 copy(1)}For $0<\kappa\leq1$, the Emden equation (\ref{Emden}),%
\begin{equation}
\left\{
\begin{array}
[c]{c}%
\ddot{a}(s)-\frac{\xi}{a(s)^{\kappa}}=0\\
a(0)=a_{0}>0,\text{ }\dot{a}(0)=a_{1},
\end{array}
\right.  \label{emden1/2}%
\end{equation}
(1) if $\xi<0$, there exists a finite time $S$, such that
\begin{equation}
\underset{s\rightarrow S^{-}}{\lim}a(s)=0.
\end{equation}
\newline(2) if $\xi>0$, the solution $a(t)$ exists globally, such that%
\begin{equation}
\underset{s\rightarrow+\infty}{\lim}a(s)=-\infty.
\end{equation}

\end{lemma}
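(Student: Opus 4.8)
The plan is to run the classical energy (first–integral) method for the autonomous second–order equation (\ref{emden1/2}). The first step is to multiply $\ddot a-\xi a^{-\kappa}=0$ by $\dot a$ and integrate in $s$, which produces the conservation law
\[
\frac12\dot a(s)^2-\frac{\xi}{1-\kappa}\,a(s)^{1-\kappa}=\frac12 a_1^2-\frac{\xi}{1-\kappa}\,a_0^{1-\kappa}=:E,\qquad 0<\kappa<1,
\]
with the obvious replacement of $a^{1-\kappa}/(1-\kappa)$ by $\ln a$ when $\kappa=1$. By Lemma \ref{lemma22} the solution exists locally and stays positive as long as it exists, so the only possible breakdown mechanism is that $a$ leaves the half–line $(0,\infty)$, i.e. $a\to0$ or $a\to+\infty$; the energy identity is precisely the device that decides which one occurs.

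For part (1), $\xi<0$: here $1-\kappa>0$ and $-\xi>0$, so $-\tfrac{\xi}{1-\kappa}a^{1-\kappa}$ is nonnegative and increasing in $a$, and $E=\tfrac12 a_1^2+\tfrac{-\xi}{1-\kappa}a_0^{1-\kappa}>0$. From $\tfrac12\dot a^2=E+\tfrac{\xi}{1-\kappa}a^{1-\kappa}\ge0$ we read off the a priori bound $a(s)\le M$ with $M=\bigl((1-\kappa)E/(-\xi)\bigr)^{1/(1-\kappa)}$. Consequently $\ddot a(s)=\xi a(s)^{-\kappa}\le \xi M^{-\kappa}=:-c<0$ for as long as the solution lives, whence $\dot a(s)\le a_1-cs$ and $a(s)\le a_0+a_1 s-\tfrac{c}{2}s^2$. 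The right–hand side becomes negative in finite time, which is incompatible with positivity of $a$, so the maximal time $S$ is finite; since $a$ remains bounded above and $\xi a^{-\kappa}$ is locally Lipschitz away from $a=0$, the only way the solution can terminate at $S$ is $\lim_{s\to S^-}a(s)=0$.

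For part (2), $\xi>0$: now $\ddot a(s)=\xi a(s)^{-\kappa}>0$ while $a>0$, so $a$ is strictly convex and $\dot a$ is strictly increasing; hence $\dot a$ is eventually positive, after which $a$ is increasing and bounded away from $0$, so $\xi a^{-\kappa}$ stays bounded and the solution continues for all $s\ge0$. If $a$ stayed bounded above, $\ddot a$ would be bounded below by a positive constant, forcing $\dot a\to+\infty$ and $a\to+\infty$, a contradiction; therefore $a(s)\to+\infty$ as $s\to+\infty$. The case $\kappa=1$ is identical with $\ln a$ in place of $a^{1-\kappa}/(1-\kappa)$ throughout, and the monotonicity/boundedness arguments above go through verbatim.

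The main obstacle is not any single computation but the bookkeeping at the ends of the existence interval: in part (1) one must rule out any breakdown other than $a\to0$ (the upper bound $M$ and local Lipschitz continuity away from $0$ handle this), and in part (2) one must check that the convexity argument survives initial data with $\dot a(0)=a_1<0$, where one again invokes the energy identity to see whether $\dot a$ turns positive before $a$ can reach $0$. Since, as noted in the text, this is exactly the argument of \cite{Y1}, \cite{Yuen 3} and \cite{LS} for $\kappa\in\{1/3,1\}$ carried over to the range $0<\kappa\le1$, I would present it in compact form and refer to those sources for the routine estimates.
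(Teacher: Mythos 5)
The paper itself contains no written proof of this lemma --- the author explicitly omits it, pointing to the energy method of \cite{LS} and to the cases $\kappa=1$ and $\kappa=1/3$ treated in \cite{Y1} and \cite{Yuen 3} --- so your first-integral approach is exactly the intended route. Your part (1) is sound: the conserved quantity $\frac12\dot a^2-\frac{\xi}{1-\kappa}a^{1-\kappa}=E$ gives the a priori bound $a\le M$, hence $\ddot a\le\xi M^{-\kappa}<0$, the quadratic upper bound forces the maximal time $S$ to be finite, and since $(a,\dot a)$ stays in a compact set away from the singular boundary unless $a\to0$, the only possible breakdown is $\lim_{s\to S^-}a(s)=0$; the $\kappa=1$ case with $\ln a$ goes through the same way. (You also silently replace the paper's ``$-\infty$'' in part (2) by $+\infty$, which is surely a typo in the statement.)

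The genuine gap is in part (2), at the step ``hence $\dot a$ is eventually positive.'' Convexity of $a$ does not give this before the solution can reach $a=0$, and for $0<\kappa<1$ your own energy identity shows the step is false for sufficiently negative $a_1$: if $a_1<0$ and $E=\frac12a_1^2-\frac{\xi}{1-\kappa}a_0^{1-\kappa}>0$, then $\dot a$ can never vanish (a zero of $\dot a$ would force $\frac{\xi}{1-\kappa}a^{1-\kappa}=-E<0$, impossible for $\xi>0$, $a>0$), so $\dot a\le-\sqrt{2E}$ throughout and $a$ reaches $0$ before time $a_0/\sqrt{2E}$; concretely, $\kappa=1/2$, $\xi=1$, $a_0=1$, $a_1=-10$ terminates in finite time with $\dot a$ bounded away from zero, so the solution is not global. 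Thus the turning-point argument works only under an additional restriction such as $a_1\ge0$ or $E<0$; only for $\kappa=1$ does the logarithmic potential ($-\xi\ln a\to+\infty$ as $a\to0^+$) rescue it for every $a_1$. You flagged the case $a_1<0$ yourself as routine bookkeeping to be settled by the energy identity, but the identity actually yields the opposite conclusion in the regime $E\ge0$, so your proof of part (2) --- and, for $0<\kappa<1$, the lemma as literally stated, which places no hypothesis on $a_1$ --- is incomplete at precisely this point.
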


After obtaining the above lemma, it is clear to have the following result:

\begin{corollary}
For $0<\kappa\leq1$ and $a_{0}>0$, we have:\newline(1) $k_{3}>0$ and $\xi>0$,
the solution (\ref{ChCh2}) exists globally.\newline(2) $k_{3}<0$ and $\xi<0$,
the solution (\ref{ChCh2}) blows up on a finite time $T$.
\end{corollary}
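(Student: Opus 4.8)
The plan is to obtain both conclusions directly from the structure of the solution (\ref{ChCh2}) furnished by Theorem \ref{thm:1}, combined with the Emden dichotomy of Lemma \ref{lemma33 copy(1)}. The guiding observation is that, once $k_3$ and $\xi$ are fixed with the prescribed signs, the whole pair $(\rho,u)$ is controlled by the single scalar $a(4t)$: the velocity is the linear field $u=\dfrac{\dot a(4t)}{a(4t)}x$ with gradient $u_x=\dfrac{\dot a(4t)}{a(4t)}$, the similarity variable is $\eta=x/a(4t)^{k_2/4}$, and $\rho=\max\!\bigl(f(\eta)/a(4t)^{(k_1+k_2)/4},\,0\bigr)$ with $f$ the fixed profile. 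Since the hypothesis $0<\kappa\le 1$ with $\kappa=\tfrac{k_1}{2}+k_2-1$ is precisely the range treated in Lemma \ref{lemma33 copy(1)}, after the rescaling $s=4t$ the lifespan and large‑time behaviour of $a$ are known, and the lifespan of $(\rho,u)$ follows.

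For part (1), with $k_3>0$ and $\xi>0$, I would invoke Lemma \ref{lemma33 copy(1)}(2): the solution $a(s)$ of (\ref{Emden}) extends to all $s\ge 0$ and stays positive (the stated limit is to be read as $+\infty$, which is consistent with $\ddot a=\xi/(4a^\kappa)>0$ and $a_0>0$). Hence $a(4t)>0$ and $\dot a(4t)/a(4t)$ are finite for every $t\ge 0$, so $u$ and $u_x$ are $C^1$ and finite on every bounded space–time region; moreover $\xi/k_3>0$ confines the support of $f$ to $|\eta|\le\sqrt{\xi\alpha^2/k_3}$, so $\rho(t,\cdot)$ is compactly supported and finite on every bounded region as well. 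By Theorem \ref{thm:1} the system is satisfied (classically) away from the two moving interface points where $f$ vanishes, so the solution is global.

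For part (2), with $k_3<0$ and $\xi<0$, I would invoke Lemma \ref{lemma33 copy(1)}(1): there is a finite $S>0$ with $a(s)\to 0^{+}$ as $s\to S^{-}$; set $T:=S/4$, so that $a(4t)\to 0^{+}$ as $t\to T^{-}$. Since $\ddot a=\xi/(4a^\kappa)<0$, the function $\dot a$ is strictly decreasing; as $a(4t)$ must decrease to $0$ near $T$, one has $\dot a(4t)\le -c<0$ on a left neighbourhood of $T$ for some $c>0$ (otherwise $a$ would be nondecreasing there, contradicting $a(4t)\to 0^{+}$). Therefore $u_x(t,x)=\dot a(4t)/a(4t)\le -c/a(4t)\to-\infty$ as $t\to T^{-}$; equivalently, when $\alpha>0$, $\rho(t,0)=\alpha/a(4t)^{(k_1+k_2)/4}\to+\infty$ because $(k_1+k_2)/4>0$. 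In either case the solution cannot be continued past $T$, which is the asserted finite‑time blowup.

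The computational content is just the substitution $s=4t$ and the tracking of exponents; the one step needing a little care is the diverge‑ence of $\dot a(4t)/a(4t)$ near $T$ in part (2), which rests on the monotonicity of $\dot a$ forced by $\xi<0$ together with $a(4t)\to 0^{+}$, and—as already noted in Theorem \ref{thm:1}—the mild point that the constructed $\rho$ solves the mass equation classically only off the two moving boundary points $\eta=\pm\sqrt{\xi\alpha^2/k_3}$, so the blowup/global statement is understood for the solution in that sense.
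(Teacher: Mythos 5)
Your proposal is correct and follows essentially the same route as the paper, which leaves the corollary's proof implicit: apply Lemma \ref{lemma33 copy(1)} (with $s=4t$) so that $\xi>0$ gives a globally defined, positive $a$ and hence globally defined $(\rho,u)$, while $\xi<0$ forces $a(4t)\to 0^{+}$ at $T=S/4$ and hence $\rho(t,0)=\alpha/a(4t)^{(k_1+k_2)/4}\to+\infty$, exactly as in the paper's subsequent remark. Your extra observation that $u_x=\dot a(4t)/a(4t)\to-\infty$ (via monotonicity of $\dot a$) is a welcome refinement, and it also covers the degenerate case $\alpha=0$ where the density argument alone would not show blowup.
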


For the other cases, the interested reader may determine easily by the
classical energy method. In particular, for $\kappa>1$, the blowup or global
behavior may be referred \cite{DXY}.

\begin{remark}
For $k_{3}<0$ and $\xi<0$, the blowup solutions (\ref{ChCh2}) collapse at the
origin:%
\begin{equation}
\underset{t\rightarrow T^{-}}{\lim}\rho(t,0)=+\infty
\end{equation}
with a finite time $T;$\newline For $k_{3}>0$ and $\xi>0$, the global behavior
of the solution (\ref{ChCh2}) at the origin is:
\begin{equation}
\underset{t\rightarrow+\infty}{\lim}\rho(t,0)=0.
\end{equation}

\end{remark}

\begin{remark}
The solutions (\ref{ChCh2}) are only $C^{0}$ functions, as the function
$f(\eta)$ is discontinuous at the two boundary points, for $\alpha>0$:%
\begin{equation}
\underset{\eta^{2}\rightarrow\left\vert \xi\alpha\right\vert }{\lim}\dot
{f}(\eta)\neq0.
\end{equation}

\end{remark}

\begin{remark}
Our analytical solutions could provide concrete examples for testing the
validation and stabilities of numerical methods for the systems. Additionally,
our special solutions can shed some light on the understanding of evolutionary
pattern of the systems.
\end{remark}

\begin{remark}
We may calculate the mass of\newline(1) for $a_{0}>0$, the solutions
(\ref{ChCh2}) with $(k_{3}>0$ and $\xi>0)$ or $(k_{3}<0$ and $\xi<0)$:
\begin{equation}
Mass=\int_{-\infty}^{+\infty}\rho(t,x)dx<+\infty;
\end{equation}
\newline(2) for $a_{0}<0$ and $(-1)^{\kappa}=-1$, the solutions (\ref{ChCh2})
with $(k_{3}>0$ and $\xi<0)$ or $(k_{3}<0$ and $\xi>0)$ :%
\begin{equation}
Mass=+\infty.
\end{equation}

\end{remark}

\section{Blowup Phenomenon for $k_{3}\geq0$}

In this section, we show the blowup phenomenon by the standard method. The
system can be converted to a quasi-linear evolution of hyperbolic type:%
\begin{equation}
\left\{
\begin{array}
[c]{c}%
\rho_{t}=-k_{2}\rho_{x}u-(k_{1}+k_{2})\rho u_{x}\\
u_{t}+u\cdot u_{x}+\partial_{x}G\ast(\frac{3}{2}u^{2}+\frac{k_{3}}{2}\rho
^{2})=0
\end{array}
\right.  \label{2-comcon}%
\end{equation}
where the sign $\ast$ denotes the spatial convolution, $G(x)$ is the
associated Green function of the operator $(1-\partial_{x}^{2})^{-1}$. For the
local well-possness and the blowup phenomenon with odd initial values
$(\rho_{0},u_{0},)$, are discussed in \cite{JG} recently. We may show the
blowup by the particle trajectory method. Consider the following problem:%
\begin{equation}
\left\{
\begin{array}
[c]{c}%
q_{t}=u(q,t)\text{, }0<t<T\\
q(x,0)=x.
\end{array}
\right.
\end{equation}

we apply the characteristic curve method of hyperbolic partial differential
equations, as the following theorem:

\begin{theorem}
Suppose the velocity $u$ is uniformly bounded at some point $x_{0}$,$\ $such
that $u(t,x_{0})\leq M$ and $u_{x}(x_{0},0)<-\sqrt{\frac{3}{2}}\left\vert
M\right\vert $and the regularity of the solution at $x_{0}$ is sufficient
enough. The nontrivial solutions blow up before a finite time $T.$
\end{theorem}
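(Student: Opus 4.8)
The plan is to work along the characteristic curve $q(t)=q(x_0,t)$ emanating from $x_0$ and to derive a Riccati-type differential inequality for the quantity $w(t):=u_x(q(t),t)$ that forces $w$ to reach $-\infty$ in finite time. First I would differentiate the second equation of the system (\ref{2-comcon}) in $x$ and then restrict to the characteristic, so that the material derivative $\frac{d}{dt}u_x(q(t),t)=u_{xt}+u\,u_{xx}$ appears naturally on the left. Using $\partial_x^2 G\ast h=G\ast h-h$ (since $G$ is the Green function of $1-\partial_x^2$), the $\partial_x G\ast(\tfrac32 u^2+\tfrac{k_3}{2}\rho^2)$ term differentiates to
\begin{equation}
-u_x^2-\tfrac{3}{2}u^2-\tfrac{k_3}{2}\rho^2+G\ast\!\left(\tfrac32 u^2+\tfrac{k_3}{2}\rho^2\right),
\end{equation}
so that along $q$ we obtain an identity of the shape $\dot w=-\tfrac12 w^2-\tfrac32 u^2-\tfrac{k_3}{2}\rho^2+G\ast(\cdots)$. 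The key structural point is that for $k_3\ge 0$ both $\tfrac{k_3}{2}\rho^2$ and the convolution term are nonnegative (because $G\ge 0$ and the integrand is nonnegative), and $\|G\ast(\tfrac32 u^2+\cdots)\|_{L^\infty}$ can be controlled: since $\|G\|_{L^1}=1$ one gets $G\ast(\tfrac32 u^2)\le \tfrac32\|u\|_{L^\infty}^2$, and with the hypothesis $u(t,x_0)\le M$ — which I would upgrade, via the boundedness assumption, to a uniform-in-time bound $\|u(t,\cdot)\|_{L^\infty}\le |M|$ along the relevant region — this term is bounded by $\tfrac32 M^2$.

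Combining these estimates, along the characteristic we get the differential inequality
\begin{equation}
\dot w \le -\tfrac12 w^2+\tfrac32 M^2,
\end{equation}
after absorbing the negative terms $-\tfrac32 u^2-\tfrac{k_3}{2}\rho^2$ (dropped, being $\le 0$) against the single positive contribution from the convolution. Then I would invoke the standard Riccati comparison: if the initial value $w(0)=u_x(x_0,0)$ satisfies $w(0)<-\sqrt{3}\,|M|$ — which is exactly the hypothesis $u_x(x_0,0)<-\sqrt{3/2}\,|M|$ once the constants are tracked through — then the comparison ODE $\dot y=-\tfrac12 y^2+\tfrac32 M^2$ with $y(0)=w(0)$ blows up to $-\infty$ at a finite time $T$, and hence so does $w$. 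Separating variables in the comparison equation gives an explicit $T$ (a logarithm of the ratio $\tfrac{w(0)+\sqrt3|M|}{w(0)-\sqrt3|M|}$ divided by $\sqrt3|M|$), which I would record. Blowup of $u_x$ at $(q(T),T)$ contradicts the assumed regularity, so the solution cannot persist past $T$; for the trivial solution this degenerates, explaining the word "nontrivial."

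The main obstacle I anticipate is justifying the uniform-in-time $L^\infty$ bound on $u$ needed to bound the convolution term: the hypothesis as stated only controls $u$ at the single point $x_0$ (and apparently only its sign/size there), so one must either assume the solution stays in a function space where $\|u(t,\cdot)\|_{L^\infty}$ is a priori finite up to the blowup time, or exploit a conserved/monotone quantity of the system to propagate the bound — this is the delicate point where the "regularity of the solution is sufficient enough" clause does the real work. A secondary technical care point is the interplay of the constants $\tfrac12$, $\tfrac32$, $\sqrt{3/2}$: one must make sure the coefficient of $w^2$ really is $\tfrac12$ (it comes from $-u_x^2$ plus the $u\,u_{xx}$ and $u_{xt}$ bookkeeping) so that the threshold $\sqrt{3/2}\,|M|$ in the hypothesis matches the blowup condition for the comparison ODE; everything else is the routine Riccati argument.
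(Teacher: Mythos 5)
Your overall strategy (differentiate the convolution form of the equation in $x$, pass to the characteristic, obtain a Riccati inequality, and conclude finite-time blowup of $u_x$) is the same as the paper's, but two concrete errors in the derivation break the argument. First, the signs of the local/nonlocal split are flipped. Differentiating $u_t+uu_x+\partial_xG\ast\bigl(\tfrac32u^2+\tfrac{k_3}{2}\rho^2\bigr)=0$ and using $\partial_x^2G\ast h=G\ast h-h$ gives, along the characteristic,
\begin{equation}
\frac{d}{dt}u_x=-u_x^2+\tfrac32u^2+\tfrac{k_3}{2}\rho^2-G\ast\Bigl(\tfrac32u^2+\tfrac{k_3}{2}\rho^2\Bigr),
\end{equation}
i.e.\ the local terms enter with a plus sign and the convolution with a minus sign (also, the $-u_x^2$ comes from $\partial_x(uu_x)$, not from the convolution as you wrote). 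This matters structurally: with the correct signs, the hypothesis $k_3\geq0$ together with $G\geq0$ lets you simply drop $-G\ast(\cdots)\leq0$, and the only quantity you must control is the pointwise value $\tfrac32u^2(x_0,t)\leq\tfrac32M^2$ — exactly what the stated hypothesis $u(t,x_0)\leq M$ provides. Your flipped signs force you to bound $G\ast(\tfrac32u^2+\tfrac{k_3}{2}\rho^2)$ from above, which requires a global-in-space $L^\infty$ bound on $u$ \emph{and} on $\rho$ (the $\rho$ part you do not address at all); the "main obstacle" you flag is thus an artifact of the sign error, not a feature of the problem.

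Second, the Riccati coefficient is wrong. The coefficient $\tfrac12$ in front of $w^2$ arises in the Camassa--Holm setting because the convolution there contains $\tfrac12u_x^2$; here the convolution is $G\ast(\tfrac32u^2+\tfrac{k_3}{2}\rho^2)$ with no $u_x^2$ inside, so the correct inequality is $\dot w\leq-w^2+\tfrac32M^2$, whose blowup threshold is $w(0)<-\sqrt{3/2}\,|M|$, matching the theorem. Your inequality $\dot w\leq-\tfrac12w^2+\tfrac32M^2$ has threshold $w(0)<-\sqrt3\,|M|$, which is \emph{strictly stronger} than the hypothesis $u_x(x_0,0)<-\sqrt{3/2}\,|M|$; the constants do not "track through," so your version does not prove the stated theorem. (Separately, note that neither you nor the paper can simply ignore the local term $+\tfrac{k_3}{2}\rho^2(x_0,t)$, which is nonnegative and sits on the wrong side of the upper bound; the paper drops it silently, and a fully rigorous argument would need to control it, e.g.\ by an assumption on $\rho$ along the characteristic.)
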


\begin{proof}
In general, we show that the $\rho(t,x(t;x))$ preserves its positive nature as
the mass equation (\ref{2-comcon})$_{1}$ can be converted to be%
\begin{equation}
\rho_{t}+k_{2}\rho_{x}u=-(k_{1}+k_{2})\rho u_{x}%
\end{equation}%
\begin{equation}
\frac{D\rho}{Dt}+(k_{1}+k_{2})\rho\frac{\partial}{\partial x}\cdot u=0
\label{eqq2}%
\end{equation}
with the material derivative:%
\begin{equation}
\frac{D}{Dt}=\frac{\partial}{\partial t}+k_{2}\left(  u\frac{\partial
}{\partial x}\right)  . \label{eqq1}%
\end{equation}
We integrate the equation (\ref{eqq2})$:$%
\begin{equation}
\rho(t,x)=\rho_{0}(x_{0}(0,x_{0}))\exp\left(  -(k_{1}+k_{2})\int_{0}^{t}%
\nabla\cdot u(t,x(t;0,x_{0}))dt\right)  \geq0
\end{equation}
for $\rho_{0}(x_{0}(0,x_{0}))\geq0,$ along the characteristic curve.\newline
To drive the argument for blowing up, we differentiate equation
(\ref{2-comcon})$_{2}$ with respect to $x$:%
\begin{equation}
u_{xt}=-u_{x}^{2}-uu_{xx}+\frac{3}{2}u^{2}+\frac{k_{3}}{2}\rho^{2}%
-G\ast\left(  \frac{3}{2}u^{2}+\frac{k_{3}}{2}\rho^{2}\right)  .
\end{equation}
Along the another characteristic curve with
\begin{equation}
\frac{d}{dt}:=\frac{\partial}{\partial t}-u\frac{\partial}{\partial x},
\end{equation}
we have for some point $x_{0}$:%
\begin{align}
\frac{du_{x}(x_{0},t)}{dt}  &  =-u_{x}^{2}(x_{0},t)+\frac{3}{2}u^{2}%
(x_{0},t)-G\ast\left(  \frac{3}{2}u^{2}+\frac{k_{3}}{2}\rho^{2}\right)
(x_{0},t)\\
&  \leq-u_{x}^{2}(x_{0},t)+\frac{3}{2}M^{2}.
\end{align}
The above Racci equation blows up before a finite time $T$, if the initial
value,
\begin{equation}
u_{x}(x_{0},0)<-\sqrt{\frac{3}{2}}\left\vert M\right\vert .
\end{equation}
The proof is completed.
\end{proof}

\end{document}